\documentclass[journal,10pt]{IEEEtran}
\IEEEoverridecommandlockouts
\usepackage[utf8]{inputenc}
\usepackage[T1]{fontenc} 
\usepackage{graphicx}
\usepackage{listings}
\usepackage{float}
\usepackage{stfloats}
\usepackage{amsmath,amssymb}
\usepackage{verbatim}
\usepackage{algorithm}
\usepackage{algorithmicx}
\usepackage{algpseudocode}
\usepackage{fancyvrb}
\usepackage{verbatim}
\usepackage{mathtools}
\usepackage{bm}
\usepackage{color}
\usepackage{float}
\usepackage{pifont} 
\usepackage{cite}
\usepackage{scalerel}
\usepackage{tikz}
\usepackage[bookmarks=false]{hyperref}
\usetikzlibrary{svg.path}

\newtheorem{proposition}{Proposition}

\title{\huge{Formation Control for CRLB-Optimal Cooperative Sensing \\in Low-Altitude Wireless Networks}}

\author{
    Jun Wu, \IEEEmembership{Student Member, IEEE}, Haijia Jin, \IEEEmembership{Student Member, IEEE}, Nanchi Su, \IEEEmembership{Member, IEEE}, \\  Jinna Li, \IEEEmembership{Senior Member, IEEE}, Haoyuan Pan, \IEEEmembership{Member, IEEE}, and Tse-Tin~Chan, \IEEEmembership{Member, IEEE}
  
\thanks{
  (\textit{Corresponding author: Tse-Tin Chan}) J. Wu is with the School of Automation and Intelligent  Manufacturing, Southern University of Science and Technology, Shenzhen 518055, China, and with the Department of Mathematics and Information Technology, The Education University of Hong Kong, Hong Kong, China.
  H. Jin is with the School of Automation and Intelligent  Manufacturing, Southern University of Science and Technology, Shenzhen 518055, China (e-mail: jinhj2024@mail.sustech.edu.cn).
  N. Su is with Guangdong Provincial Key Laboratory of Aerospace Communication and Networking Technology, Harbin Institute of Technology (Shenzhen), Shenzhen 518055, China, and with the Department of Mathematics and Information Technology, The Education University of Hong Kong, Hong Kong, China (e-mail: sunanchi@hit.edu.cn). 
  J. Li is with the School of Information and Control Engineering, Liaoning Petrochemical University, Fushun, 113001, China (e-mail: jnli@lnpu.edu.cn).
  H. Pan is with the College of Computer Science and Software Engineering, Shenzhen University, Shenzhen, China (e-mail: hypan@szu.edu.cn).
  T.-T.~Chan is with the Department of Mathematics and Information Technology, The Education University of Hong Kong, Hong Kong, China (e-mail: tsetinchan@eduhk.hk).
  }
  \vspace{-0.3in}
  }

\begin{document}
\maketitle
\begin{abstract}
Cooperative sensing with uncrewed aerial vehicles (UAVs) is a key enabler for low-altitude wireless networks (LAWNs), where sensing accuracy critically depends on the spatial configuration of the UAV formation. In this paper, we study formation design and control for Cramér–Rao lower bound (CRLB)-optimal cooperative target sensing. We first establish a sensing performance model based on range measurements and derive the Fisher information matrix (FIM) of the target location. By adopting the A-optimality criterion, we analytically characterize the formation geometry that minimizes the CRLB of the estimation error. The optimal formation is shown to exhibit isotropic Fisher information in the horizontal plane, leading to a regular polygon geometry with an elevation angle determined by the tradeoff between path loss and geometric diversity. Building on this result, we further develop a distributed formation control strategy that steers UAVs from arbitrary initial deployments toward the sensing-optimal configuration while maintaining formation motion and obstacle avoidance. Numerical results demonstrate that the proposed scheme consistently outperforms benchmark formations in terms of CRLB and achieves reliable convergence under practical constraints.
\end{abstract}
\begin{IEEEkeywords}
Low-altitude wireless networks, cooperative sensing, CRLB, formation geometry, distributed formation control.
\end{IEEEkeywords}

\vspace{-0.2in}
\section{Introduction}
 Low-altitude wireless networks (LAWNs) based on uncrewed aerial vehicles (UAVs) have emerged as a flexible platform for sensing, communication, and intelligence acquisition in applications such as surveillance and emergency response. Owing to their mobility and adaptive deployment capability, UAV swarms can cooperatively observe targets from multiple spatial perspectives, thereby significantly enhancing sensing accuracy compared with a single platform \cite{yuan2025ground,wu2025low}.

In cooperative UAV sensing, the relative geometry among sensing nodes plays a fundamental role in determining estimation performance \cite{9097832}. Even with identical sensing hardware and transmit power, different spatial formations can yield substantially different information gains due to geometric dilution and angular diversity. As a result, geometry-aware deployment has become a key issue in UAV-enabled sensing systems, particularly in LAWNs, where path-loss effects and limited onboard resources further amplify the impact of spatial configuration. A principled way to quantify sensing performance is through the Cramér–Rao lower bound (CRLB), which links the minimum achievable mean-squared error (MSE) to the Fisher information matrix (FIM). For example, \cite{8786125} analytically characterized optimal sensor placement for time-of-arrival (ToA)-based localization by deriving closed-form CRLB-optimal geometries. To extend static placement to dynamic scenarios, \cite{10158322} studied a LAWN-enabled system where UAV trajectories were optimized to minimize the CRLB for cooperative target tracking. However, existing studies largely optimize sensor placement or individual trajectories, without explicitly addressing how multiple UAVs should cooperatively form and maintain a sensing-optimal geometric configuration. Moreover, practical UAV sensing missions are subject to additional constraints such as obstacle avoidance and coordinated motion \cite{10557642}. In such environments, sensing performance depends not only on the instantaneous spatial geometry but also on the ability of the UAV swarm to dynamically converge to and sustain a desired formation. These considerations limit the direct applicability of existing placement-based results and motivate the need for formation-level design and distributed control mechanisms that jointly regulate geometry, coordination, and safety.

Motivated by the above, this paper investigates optimal sensing-oriented formation design and distributed control in LAWNs. In contrast to existing approaches, our focus is on jointly characterizing the sensing-optimal formation geometry and realizing it through distributed coordination under physical safety constraints. The main contributions of this paper are summarized as follows:
\begin{itemize}
    \item We analytically characterize the sensing-optimal formation, showing that the CRLB is minimized when the FIM is isotropic, leading to a regular polygon geometry with a common elevation angle determined by the tradeoff between path-loss effects and geometric conditions.
    
    \item Subsequently, a distributed formation control strategy is developed to steer UAVs from arbitrary initial deployments toward the desired formation while ensuring coordinated motion and obstacle avoidance using only local neighbor information.
    
    \item Simulation results are provided to demonstrate that the proposed formation design and control strategy consistently outperform benchmark configurations in terms of CRLB and formation convergence behavior.
\end{itemize}
The rest of this paper is organized as follows. Sec. \ref{sec2} introduces the system model, while Sec. \ref{sec3} derives the optimal UAV formation geometry based on CRLB analysis. Sec. \ref{sec4} presents the distributed formation control strategy. Numerical results are provided in Sec. \ref{sec5}, followed by conclusions in Sec. \ref{sec6}.

 
 \vspace{-0.1in}
 \section{System Model} \label{sec2}

As illustrated in Fig.~\ref{fig:sce}, we consider a typical LAWN in which a formation of $M\ge3$ UAVs cooperatively performs sensing of an unknown ground target. To ensure reliable sensing performance, the UAVs are required to maintain an optimized formation that fully exploits spatial diversity. In this work, our goal is twofold. First, the optimal formation shape is obtained by optimizing the cooperative sensing performance, followed by enforcing the UAVs to align their configurations with this target formation, starting from arbitrary initial deployments.

Without loss of generality, a three-dimensional operating environment is considered, in which all UAVs fly at a fixed altitude $H$. Let $\mathcal{F^*}$ denote the final desired sensing formation, in which the horizontal coordinates of UAV $m$ are given by $\mathbf{q}_{m}^{\mathrm{e}}=[x_m^\mathrm{e}, y_m^\mathrm{e}]^\top$. Furthermore, the ground target to be sensed is located at $\mathbf{s}=[x_s,y_s]^\top\in\mathbb{R}^2$, such that the distance between the $m$-th UAV and the target can be expressed as 
\begin{equation}
\begin{split}d_{m}  =\sqrt{\parallel\mathbf{q}^{\mathrm{e}}_{m}-\mathbf{s}\parallel^2+H^{2}}, \forall m.
\end{split}
\end{equation}
For UAV $m$, the round-trip delay measurement $\tau_m$ is collected and converted into a range through ${d}_m=\tau_m c/2$, where $c$ is the speed of light. The range estimates are affected by measurement noise, such that the measurement of $d_{m}$ can be modeled as
\begin{equation}
\hat{d}_{m}=d_{m}+z_{m}, \forall m,\label{eq:d_measure}
\end{equation}
where $z_{m}\sim\mathcal{N}\left(0,\sigma_{m}^{2}\right)$ is the additive white Gaussian noise (AWGN) with zero mean and variance $\sigma_{m}^{2}$. The distance vector is denoted by $\mathbf{d} = [d_1, \dots, d_M]^\top$.   In general, the variance $\sigma_{m}^{2}$ is inversely proportional to the signal-to-noise ratio (SNR) at the UAV, i.e.,  $\mathrm{SNR}_{m}=\frac{p_{m}G_{\textrm{p}}h_{m}}{\sigma_{0}^{2}}$,
where $p_m$ is the transmit power allocated to UAV $m$, $G_{\textrm{p}}$ is the processing gain, and $\sigma_0^2$ is the receiver noise power. For a two-way sensing channel, the power gain is modeled as a quartic distance loss, i.e., $h_{m}\triangleq\frac{\beta_{0}}{\left(d_{m}\right)^{4}}$ with $\beta_{0}\triangleq\frac{G_{\textrm{T}}G_{\textrm{r}}\sigma_{\textrm{rcs}}\lambda^{2}}{\left(4\pi\right)^{3}}$ representing the channel power at a unit reference distance. Here, $G_{\textrm{T}}$ and $G_{\textrm{r}}$ denote the transmit and receive antenna gains, respectively, $\sigma_{\textrm{rcs}}$ is the radar cross section (RCS), and $\lambda$ is the carrier wavelength. Consequently, $\sigma_{m}^{2}$ can be particularized as $\sigma_{m}^{2}=\frac{\kappa\sigma_{0}^{2}\left(d_{m}\right)^{4}}{p_{m}G_{\textrm{p}}\beta_{0}}$, with $\kappa$ being a constant system parameter \cite{11045436}. 

\begin{figure}
    \centering
    \includegraphics[width=\linewidth]{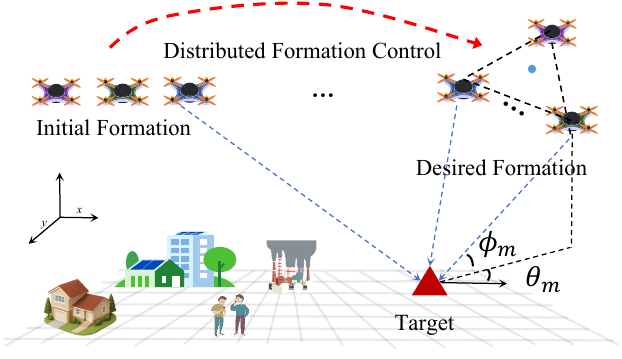}
    \caption{The considered LAWN system where a formation is dispatched to  sense the target cooperatively.}
    \label{fig:sce}
    \vspace{-0.5cm}
\end{figure}
During the transition of the formation toward $\mathcal{F^*}$, the planar motions are controlled.
To facilitate coordination, one UAV is selected as the formation leader, say UAV $1$, while the remaining UAVs are followers.
The set of UAVs is indexed by $\mathcal{M}=\{1,2,\ldots,M\}$.  At discrete time instant $k$, the planar position of UAV $m\in\mathcal{M}$ is denoted by $\mathbf{q}_{m,k}\in\mathbb{R}^{2}$. The motion of the UAV formation is modeled as the superposition of a global translational movement of the formation center and individual local adjustments that shape the formation geometry. Accordingly, the position dynamics of UAV $m$ are described by the following first-order Gaussian Markov model\cite{10557642}, i.e.
\begin{equation}
\mathbf{q}_{m,k+1}
=\mathbf{q}_{m,k}
+\mathbf{u}_{m,k}
+\mathbf{g}_{m,k}\Delta T
+\mathbf{n}_{m,k},
\label{eq:state}
\end{equation}
where $\mathbf{g}_{m,k}\in\mathbb{R}^2$ represents the velocity associated with the global motion at time $k$, $\mathbf{u}_{m,k}\in\mathbb{R}^2$ denotes the formation control input of UAV $m$, $\Delta T$ is the sampling interval between any two adjacent instants, and $\mathbf{n}_{m,k}\sim\mathcal{N}(\mathbf{0},\sigma^2\mathbf{I})$ accounts for random motion disturbance, where $\sigma^2$ denotes the disturbance variance. 


\vspace{-0.1in}
\section{ CRLB-Optimal Formation Geometry}  \label{sec3}
In this section, we derive the formation geometry $\mathcal{F^*}$ for optimally sensing the ground target $\mathbf{s}$.   In general, MSE is employed as the performance metric for unbiased estimation. To achieve MSE optimization, the A-optimality criterion, which minimizes the average estimation variance, is adopted. For ease of exposition, we assume that a prior coarse estimate of the target's location $\mathbf{s}$ is available.  

To evaluate the sensing performance, we first construct the FIM of the range measurements $\mathbf{J}_{\mathbf{d}} $, such that the FIM of coordinates can be achieved via the chain rule, i.e., 
$  \mathbf{J}_\mathbf{s} = \mathbf{Q}^\top \mathbf{J}_{\mathbf{d}} \mathbf{Q},$
where  $\mathbf{Q} \in \mathbb{R}^{M \times 2}$ is the Jacobian matrix given by 
\begin{equation}
\mathbf{Q} = \frac{\partial \mathbf{d}}{\partial \mathbf{s}} = \begin{bmatrix} \frac{\partial d_1}{\partial x_s} & \frac{\partial d_1}{\partial y_s} \\ \vdots & \vdots \\ \frac{\partial d_M}{\partial x_s} & \frac{\partial d_M}{\partial y_s} \end{bmatrix}.
\end{equation}
Based on \eqref{eq:d_measure}, the likelihood function for the range measurement of the $m$-th UAV is given by $p(\hat{d}_m | d_m) = \frac{1}{\sqrt{2\pi \sigma_m^2}} \exp\left(-\frac{(\hat{d}_m - d_m)^2}{2\sigma_m^2}\right)$. Assuming the measurement noises among different UAVs are independent, the FIM $\mathbf{J}_{\mathbf{d}} \in \mathbb{R}^{M \times M}$ is therefore a diagonal matrix. Each diagonal element $[\mathbf{J}_{\mathbf{d}}]_{m,m}$ accounts for information from both the measurement mean and the variance, i.e., 
\begin{equation}
\begin{split}
[\mathbf{J}_{\mathbf{d}}]_{m,m} &= -\mathbb{E}\left[ \frac{\partial^2 \ln p(\hat{d}_m | d_m)}{\partial d_m^2} \right] \\
&= \frac{1}{\sigma_m^2} \left( \frac{\partial d_m}{\partial d_m} \right)^2 + \frac{1}{2\sigma_m^4} \left( \frac{\partial \sigma_m^2}{\partial d_m} \right)^2.
\end{split}
\end{equation}
With some algebraic manipulations, we arrive at 
\begin{equation}
[\mathbf{J}_{\mathbf{d}}]_{m,m} = \frac{1}{\sigma_m^2} + \frac{1}{2\sigma_m^4} \left( \frac{4\sigma_m^2}{d_m} \right)^2 = \frac{p_{m}G_{\textrm{p}}\beta_{0}}{\kappa\sigma_{0}^{2}d_{m}^{4}} + \frac{8}{d_{m}^{2}}.
\end{equation}
To facilitate the formation shape derivation, we have the geometric relationship between UAV $m$ and the target as follows
\begin{equation}
\begin{split}
    \frac{\partial d_m}{\partial x_s} &= \frac{x_s-x_m^\mathrm{e} }{d_m} = \cos \phi_m \cos \theta_m, \\
    \frac{\partial d_m}{\partial y_s} &= \frac{ y_s-y_m^\mathrm{e} }{d_m} = \cos \phi_m \sin \theta_m,
\end{split}
\end{equation}
where $\theta_m\in[0,2\pi]$ is the azimuth angle and $\phi_m\in (0,\pi/2)$ is the elevation angle of UAV $m$ relative to the target. 
 Then, the FIM for the target coordinates $\mathbf{s}$ is calculated as:
\begin{equation}
\mathbf{J}_\mathbf{s} = \mathbf{Q}^\top \mathbf{J}_{\mathbf{d}} \mathbf{Q} = \begin{bmatrix} J_{xx} & J_{xy} \\ J_{xy} & J_{yy} \end{bmatrix}.
\end{equation}
The individual components of $\mathbf{J}_\mathbf{s} $ are written by 
\begin{equation}\label{fimj}
\begin{aligned}
J_{xx} &= \sum_{m=1}^{M} w_m  \cos^2 \theta_m, \quad
J_{yy} = \sum_{m=1}^{M} w_m \sin^2 \theta_m, \\
J_{xy} &= \sum_{m=1}^{M} w_m \sin \theta_m \cos \theta_m,
\end{aligned}
\end{equation}
with $w_m=(\frac{p_{m}G_{\textrm{p}}\beta_{0}\sin^4\phi_m}{\kappa\sigma_{0}^{2}H^{4}} + \frac{8\sin^2\phi_m}{H^{2}})\cos^2 \phi_m  $ due to $\sin \phi_m=H/d_m$.
The A-optimality criterion minimizes $\text{tr}(\mathbf{J}_{\mathbf{s}}^{-1})$, i.e., $\arg \min_{\theta_m,\phi_m} \ \text{tr}(\mathbf{J}_{\mathbf{s}}^{-1})$. To this end, we have the following results:

\begin{proposition}
\label{lemma1}
For any fixed $w_m$, the trace of the inverse FIM, $\text{tr}(\mathbf{J}_{\mathbf{s}}^{-1})$, reaches its global minimum if and only if $\mathbf{J}_{\mathbf{s}}$ is a scalar matrix.
\end{proposition}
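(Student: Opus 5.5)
With the weights $w_m>0$ held fixed, the trace of $\mathbf{J}_{\mathbf{s}}$ does not depend on the azimuths. By \eqref{fimj},
\begin{equation*}
\operatorname{tr}(\mathbf{J}_{\mathbf{s}})=J_{xx}+J_{yy}=\sum_{m=1}^{M} w_m\triangleq W .
\end{equation*}
The plan is to express $\operatorname{tr}(\mathbf{J}_{\mathbf{s}}^{-1})$ through the eigenvalues of $\mathbf{J}_{\mathbf{s}}$ and minimize it subject to this fixed trace. Since $\mathbf{J}_{\mathbf{s}}$ is symmetric positive semidefinite, let $\lambda_1,\lambda_2\ge 0$ be its eigenvalues. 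We only need the case where $\mathbf{J}_{\mathbf{s}}$ is invertible; otherwise the CRLB is infinite and the configuration cannot be a minimizer. In that case
\begin{equation*}
\operatorname{tr}(\mathbf{J}_{\mathbf{s}}^{-1})=\frac{1}{\lambda_1}+\frac{1}{\lambda_2},\qquad \lambda_1+\lambda_2=W .
\end{equation*}

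We then lower-bound this quantity in either of two ways. By the AM--HM inequality,
\begin{equation*}
\frac{1}{\lambda_1}+\frac{1}{\lambda_2}\ \ge\ \frac{4}{\lambda_1+\lambda_2}=\frac{4}{W}.
\end{equation*}
Equivalently, $\operatorname{tr}(\mathbf{J}_{\mathbf{s}}^{-1})=\operatorname{tr}(\mathbf{J}_{\mathbf{s}})/\det(\mathbf{J}_{\mathbf{s}})$ in the $2\times2$ case, and $\det(\mathbf{J}_{\mathbf{s}})=\lambda_1\lambda_2\le W^2/4$ by AM--GM. Equality holds iff $\lambda_1=\lambda_2=W/2$. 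A symmetric matrix with equal eigenvalues is orthogonally similar to, and therefore equal to, $\tfrac{W}{2}\mathbf{I}$. This gives both directions of the ``if and only if'': every scalar FIM attains $4/W$, and any configuration attaining $4/W$ must have a scalar FIM.

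The remaining point is that $4/W$ is actually the global minimum, meaning a scalar $\mathbf{J}_{\mathbf{s}}$ is reachable by some choice of azimuths. I expect this to be the main subtlety, and it depends on what ``fixed $w_m$'' allows. Using double-angle identities, $J_{xx}-J_{yy}=\sum_m w_m\cos 2\theta_m$ and $2J_{xy}=\sum_m w_m\sin 2\theta_m$. A scalar $\mathbf{J}_{\mathbf{s}}$ is therefore equivalent to
\begin{equation*}
\sum_{m=1}^{M} w_m e^{\jmath 2\theta_m}=0 .
\end{equation*}
For equal weights and $M\ge3$, the choice $\theta_m=2\pi(m-1)/M$ achieves this, because the doubled angles form a complete set of roots of unity (for $M\ge3$ these sum to zero; the $M$ even case still works since $e^{\jmath4\pi(m-1)/M}$ runs over the $M/2$-th roots twice). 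For general weights, a zero sum exists iff $\max_m w_m\le W/2$, i.e., the vectors can close into a polygon.

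I would state the proposition as a lower bound that is attained precisely when $\mathbf{J}_{\mathbf{s}}$ is scalar. I would note that attainability holds under the polygon condition, which covers the symmetric case used afterwards (identical $\phi_m$, $p_m$), and this leads to the regular polygon formation.
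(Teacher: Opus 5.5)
Your proof is correct and follows the same route as the paper. Both fix $\operatorname{tr}(\mathbf{J}_{\mathbf{s}})=\sum_m w_m$, apply the AM--HM inequality to the two eigenvalues, and use orthogonal diagonalizability to turn $\lambda_1=\lambda_2$ into $\mathbf{J}_{\mathbf{s}}=\lambda\mathbf{I}$. Your attainability check goes beyond the paper: the paper's proof silently assumes the bound $4/\sum_m w_m$ is reachable, and the ``only if'' direction depends on this. You show it is reachable under the closure condition $\max_m w_m\le \tfrac12\sum_m w_m$ for $\sum_m w_m e^{\jmath 2\theta_m}=0$, which holds for the equal-weight regular polygon used afterwards.
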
 

\begin{IEEEproof}
Let $\lambda_1, \lambda_2$ be the eigenvalues of the symmetric FIM $\mathbf{J}_{\mathbf{s}}$.
The trace of the inverse FIM is given by
\begin{equation}
    \text{tr}(\mathbf{J}_{\mathbf{s}}^{-1}) = \frac{1}{\lambda_1} + \frac{1}{\lambda_2}.
\end{equation}
According to the arithmetic-harmonic mean inequality, we have
\begin{align}
    \frac{1}{\lambda_1} + \frac{1}{\lambda_2} \geq \frac{4}{\lambda_1 + \lambda_2} = \frac{4}{\text{tr}(\mathbf{J}_{\mathbf{s}})}=\frac{4}{\sum_{m=1}^{M} w_m }, \label{theolowbound}
\end{align}
where equality holds if and only if $\lambda_1 = \lambda_2$.
Since $\mathbf{J}_{\mathbf{s}}$ is symmetric, it is orthogonally diagonalizable as $\mathbf{J}_{\mathbf{s}} = \mathbf{U}\mathbf{\Lambda}\mathbf{U}^T$ with an orthogonal matrix $\mathbf{U}$ and a diagonal matrix $\mathbf{\Lambda}$.
When $\lambda_1 = \lambda_2 = \lambda$, the diagonal matrix becomes $\mathbf{\Lambda} = \lambda \mathbf{I}$, which yields
$\mathbf{J}_{\mathbf{s}} = \mathbf{U}(\lambda \mathbf{I})\mathbf{U}^T  = \lambda \mathbf{I}$, completing the proof.
\end{IEEEproof}
Proposition \ref{lemma1} suggests that $\text{tr}(\mathbf{J}_{\mathbf{s}}^{-1})$ achieves minimal value $\frac{4}{\sum_{m=1}^{M} w_m }$ if $J_{xx}=J_{yy}$ and $J_{xy}=0$.
In other words, the determination of the optimal formation $\mathcal{F^*}$ can be reformulated as the following optimization problem 
\begin{subequations}
    \label{eq:opt_model}
    \begin{align}
        &\min_{\phi_m,\theta_m} \quad \frac{4}{\sum_{m=1}^{M} w_m} \label{obbj}\\
        &\mathrm{s.t.} \quad \sum_{m=1}^M w_m \cos 2\theta_m = 0,  \sum_{m=1}^M w_m \sin 2\theta_m = 0. \label{constrain1}
    \end{align}
\end{subequations}
In general, problem \eqref{eq:opt_model} is challenging to solve directly due to the potential coupling in \eqref{constrain1}. However, we observe that the objective function \eqref{obbj} depends solely on the elevation angles ${\phi_m}$. \footnote{Note that $w_m$ also depends on the transmit power $p_m$. As this work focuses on sensing-optimal formation geometry, $p_m$ is treated as a constant across all UAVs. Related optimizations can be found in \cite{ding2023multi,10648717,11045436}.} This allows us to determine the optimal solution via a two-step approach, i.e., solving $\phi_m$ that minimizes the theoretical lower bound first, followed by finding $\theta_m$  satisfying \eqref{constrain1}.
Since each UAV can independently choose its own elevation angle,  the variables $\phi_1, \phi_2, \dots, \phi_M$ are decoupled. Then, maximizing $\sum_{m=1}^{M} w_m$ is equivalent to each individual term $w_m$ reaching its respective maximum value, i.e., $\phi_1^*=\phi_2^*=\cdots=\phi_M^*$. To determine the optimal $\phi_m^*$, we set the first-order derivative of $w_m$ with respect to $\phi_m$ equal to zero, yielding 
$\frac{d w_m}{d \phi_m} = \sin 2\phi_m D_m = 0$, where  $D_m = \frac{p_m G_{\textrm{p}} \beta_0}{\kappa \sigma_0^2 H^4} \left( 2 \sin^2 \phi_m \cos^2 \phi_m - \sin^4 \phi_m \right) + \frac{8}{H^2} (\cos^2 \phi_m - \sin^2 \phi_m)$. Given that $\phi_m \in (0, \pi/2)$, it follows that $\sin 2\phi_m \neq 0$, which implies $D_m = 0$. This equation can be further transformed into a quadratic form in terms of $\tan^2 \phi_m$, which can then be solved analytically. Due to space constraints, the detailed derivations are omitted here. Moreover, it is interesting to observe that when $H$ is relatively small, the $H^{-4}$ term dominates, leading $\phi_m^*$ to asymptotically approach $\arctan(\sqrt{2}) \approx 54.74^\circ$. At this stage, the system prioritizes maximizing the projected SNR. As $H$ increases, the $H^{-4}$ term decays rapidly, and the equation becomes governed by the geometric term $(\cos^2 \phi_m - \sin^2 \phi_m) \to 0$. Consequently, $\phi_m^*$ converges to $45^\circ$, shifting the priority from energy preserving to geometric robustness, i.e., $\phi_m^*\in(45^\circ,54.74^\circ)$.  

Subsequently, we derive the optimal azimuth angles ${\theta_m^*}$. Given that the optimal elevation angles are uniform, and thus $w_m = w^*, \forall m $, the weights in \eqref{constrain1} become a common scaling factor. Consequently, the constraints for achieving the minimum CRLB reduce to the symmetry conditions $ \sum_{m=1}^M \cos 2\theta_m = 0$ and $ \sum_{m=1}^M \sin 2\theta_m = 0$, which can be compactly represented in complex form as $\sum_{m=1}^M e^{j 2\theta_m} = 0$. This condition implies that the Fisher information must be isotropically distributed in the horizontal plane to eliminate off-diagonal correlations in the FIM. A standard solution satisfying these equations is the regular $M$-polygon formation, where the azimuth angles are distributed as $    \theta_m^* = \theta_0 + \frac{2\pi(m-1)}{M}, \quad m = 1, \dots, M$, 
where $\theta_0 \in [0, 2\pi)$ is an arbitrary initial rotation angle. In summary, the optimal sensing formation $\mathcal{F^*}$ can be geometrically characterized as an \textit{inverted cone} with its vertex at the target's location. Specifically, all UAVs are distributed uniformly along a horizontal circular orbit at altitude $H$, centered at the target's vertical projection. 
\vspace{-0.1in}
\section{Distributed Formation Control}
 \label{sec4}
In this section, we elaborate on the distributed formation control strategy, starting from arbitrary initial formations. With $\mathcal{F^*}$ at hand, the objective of formation control is to make the relative positions between UAVs satisfy the preset displacement vector set $\mathbb{F}$, which is defined as
\begin{align}
    \mathbb{F} = \left\{ \mathbf{q}_{m,k} - \mathbf{q}_{p,k} = \boldsymbol{\delta}_{pm}, \frac{\mathbf{q}_{m,k+1} - \mathbf{q}_{m,k}}{\Delta T} = \mathbf{v}^*, \forall m, p \in \mathcal{M} \right\},
\end{align}
where $\boldsymbol{\delta}_{pm} \in \mathbb{R}^2$ represents the desired constant displacement of UAV $m$ with respect to UAV $p$, and $\mathbf{v}^* \in \mathbb{R}^2$ denotes the targeted global moving velocity of the entire formation. 

To achieve this in a distributed manner, we first define an undirected communication graph $\mathcal{G}_f = (\mathcal{M}, \mathcal{E}, \mathcal{A})$ for the formation, where $\mathcal{M} = \{1, 2, \dots, M\}$ is the set of UAV nodes and $\mathcal{E} \subseteq \mathcal{M} \times \mathcal{M}$ denotes the edge set, where an edge $(m,p) \in \mathcal{E}$ indicates that UAV $m$ and $p$ can directly exchange information. $\mathcal{A}  \in \mathbb{R}^{M \times M}$ is the binary adjacency matrix with entries $a_{mp}=1$ if $(m,p) \in \mathcal{E}$ and $a_{mp}=0$ otherwise. According to \cite{oh2015survey}, in case that $\mathcal{G}_f$ is connected, the desired formation $\mathcal{F^*}$ can be uniquely characterized.  To guarantee a globally uniform velocity $\mathbf{v}^*$ across all UAVs, we design a consensus mechanism that enables distributed velocity alignment. Assuming that only the leader has prior knowledge of $\mathbf{v}^*$, the velocity update law for the follower UAV $m$ is given by 
\begin{equation}
    \mathbf{v}_{m,k+1} = - \sum_{p \in \mathcal{N}_m\backslash \{1\}} a_{mp} \left( \mathbf{v}_{m,k} - \mathbf{v}_{p,k} \right)+a_{m1} \left( \mathbf{v}_{m,k} - \mathbf{v}^* \right),
\end{equation}
where $\mathbf{v}_{m,k}$ is the desired velocity estimate of UAV $m$ and $\mathcal{N}_m = \{p \in \mathcal{M} | a_{mp}=1\}$ is the neighbor set of UAV $m$. 
If $\mathcal{G}_f$ is connected, all UAVs’ velocities will asymptotically converge to $\mathbf{v}^*$ as $k \to \infty$ \cite{ren2008distributed}, such that the global update can be simply written as $   \mathbf{g}_{m,k}=\mathbf{v}_{m,k}.$

Subsequently, we devise the formation control input $\mathbf{u}_{m,k}$ by first constructing a decentralized cost function for UAV $m$
\begin{equation} \label{costfun}
\begin{split}
        \mathcal{L}_m\left(\mathbb{F}, \mathbf{q}_{m,k}\right)= \sum_{p \in \mathcal{M}} a_{mp} \left\| \mathbf{q}_{m,k} - \mathbf{q}_{p,k} - \boldsymbol{\delta}_{pm} \right\|^2 \\+ \left\| \frac{\mathbf{q}_{m,k+1} - \mathbf{q}_{m,k}}{\Delta T} - \mathbf{v}^* \right\|^2,
    \end{split}
\end{equation}
where the first term penalizes deviations from the desired relative displacements and the second term enforces alignment with the consensus-based velocity. Hence, the total cost function is given by $\mathcal{L}_{\mathrm{tot}}=\sum_{m \in \mathcal{M}} \mathcal{L}_m\left(\mathbb{F}, \mathbf{q}_{m,k}\right)$. 
Motivated by \eqref{costfun}, we design the formation-shape control input $\mathbf{u}_{m,k}$ as a distributed negative-gradient step of the displacement-error term in \eqref{costfun} with respect to $\mathbf{q}_{m,k}$, which yields the control policy
\begin{align} \label{controlinput}
    \mathbf{u}_{m,k} = -\varepsilon\sum_{p \in \mathcal{M}} a_{mp} \left( \mathbf{q}_{m,k} - \mathbf{q}_{p,k} - \boldsymbol{\delta}_{pm} \right),
\end{align}
with $\varepsilon > 0$ being the convergence speed tuning parameter. 
This displacement-based approach leverages linear closed-loop dynamics to guarantee global convergence, where only local neighbor information is required, resulting in low computational complexity and communication overhead.
To prevent obstacle collisions in practice, a repulsive control component $\mathbf{u}^{\mathrm{r}}_{m,k}$ is further integrated into the control architecture \cite{garcia2015repulsive}.  In particular, let  $l_{m,k}$ be the shortest distance from UAV $m$ to the nearest point $\mathbf{x}_{\text{near},k}$ on the obstacle surface in time instant $k$. Then, one can define a  potential function given by 
\begin{align}
    U^{\mathrm{r}}_{m,k} = \frac{1}{2} E^{\text{r}} \left( \frac{1}{l_{m,k}} - \frac{1}{l_{\text{safe}}} \right)^2, \quad \text{if } l_{m,k}< l_{\text{safe}},
\end{align}
where $ E^{\text{r}}$ is the repulsion gain and $l_{\text{safe}}$ is the preset safety radius. The resulting repulsion vector $\mathbf{u}^{\mathrm{r}}_{m,k}$ is calculated as the negative gradient of the potential function given by
\begin{align}
    \mathbf{u}^{\mathrm{r}}_{m,k} = -\nabla U^{\mathrm{r}}_{m,k} = E^{\text{r}} \left( \frac{1}{l_{m,k}} - \frac{1}{l_{\text{safe}}} \right) \frac{1}{l_{m,k}^2} \frac{\vec{\mathbf{l}}_{m,k}}{l_{m,k}},
\end{align}
where $\vec{\mathbf{l}}_{m,k} = \mathbf{q}_{m,k} - \mathbf{x}_{\text{near},k}$ represents the relative displacement vector pointing from the nearest obstacle point to the UAV. For example, regarding rectangular obstacles with boundaries $[x_{\min}, x_{\max}]$ and $[y_{\min}, y_{\max}]$, the nearest point $\mathbf{x}_{\text{near},k} = [x_k, y_k]^T$ is determined by $x_k = \max(x_{\min}, \min({q}^x_{m,k}, x_{\max}))$ and $y_k = \max(y_{\min}, \min({q}^y_{m,k}, y_{\max}))$, with $\mathbf{q}_{m,k}=[{q}^x_{m,k},{q}^y_{m,k}]^\top$.
To accommodate formation navigation through narrow constraints, we further introduce a global scaling factor $\eta_k \in (0, 1]$ that dynamically resizes the target displacement set $\mathbb{F}$ without altering its underlying geometric topology. By incorporating both the scaling factor and the repulsive component, the total synthesized control input \eqref{controlinput} is re-expressed as
\begin{align}
    \mathbf{u}_{m,k} = -\varepsilon \sum_{p \in \mathcal{M}} a_{mp} \left( \mathbf{q}_{m,k} - \mathbf{q}_{p,k} - \eta_k \boldsymbol{\delta}_{pm} \right) + 
    \mathbf{u}^{\mathrm{r}}_{m,k}.
\end{align}
where $\eta_k$ is adaptively reduced when the formation center approaches spatial bottlenecks. Then, the stable state evolution model can be expressed as $  \mathbf{q}_{m,k+1}=\mathbf{q}_{m,k} + \mathbf{u}_{m,k}+\mathbf{v}^*\Delta T+\mathbf{n}_{m,k}$.  By doing so, the control law ensures that the UAV ensemble not only converges to the desired formation from arbitrary initializations but also maintains physical safety.

\vspace{-0.1in}
\section{Numerical Results}  \label{sec5}
In this section, we present extensive simulation results to verify the effectiveness of our proposed algorithm. We initialize $M=6$ UAVs randomly near the origin within a rectangular area. Unless otherwise specified, the primary system parameters are summarized as follows: the flight altitude is set to $H=20\ \text{m}$, transmit power $p_m = 0.1\ \text{W}, \forall m$, and the receiver noise power $\sigma_0^2 = -90 \ \text{dBm}$. The sampling interval is set to $\Delta T=0.1$ s. The ground target is located at $\mathbf{s} = [80, 90]^\top$ m. To ensure flight safety, the obstacle avoidance radius is set to $l_\text{safe} = 5\ \text{m}$ and the tuning parameter is set to $\varepsilon=0.01$. 

Fig.~\ref{fig:crlb_altitude} depicts the CRLB as a function of the UAV flight altitude under different formation configurations, thereby validating the effectiveness of the proposed cooperative sensing–oriented formation design. It is observed that the proposed CRLB-optimal formation $\mathcal{F^*}$ consistently outperforms the benchmark schemes across the entire altitude range. Moreover, the performance advantage becomes more pronounced as the altitude increases, which can be attributed to the fact that higher altitudes amplify path-loss effects and reduce angular diversity, making the sensing accuracy increasingly sensitive to the geometric configuration of the formation. 

\begin{figure}[t]
    \centering
    \includegraphics[width=0.9\linewidth]{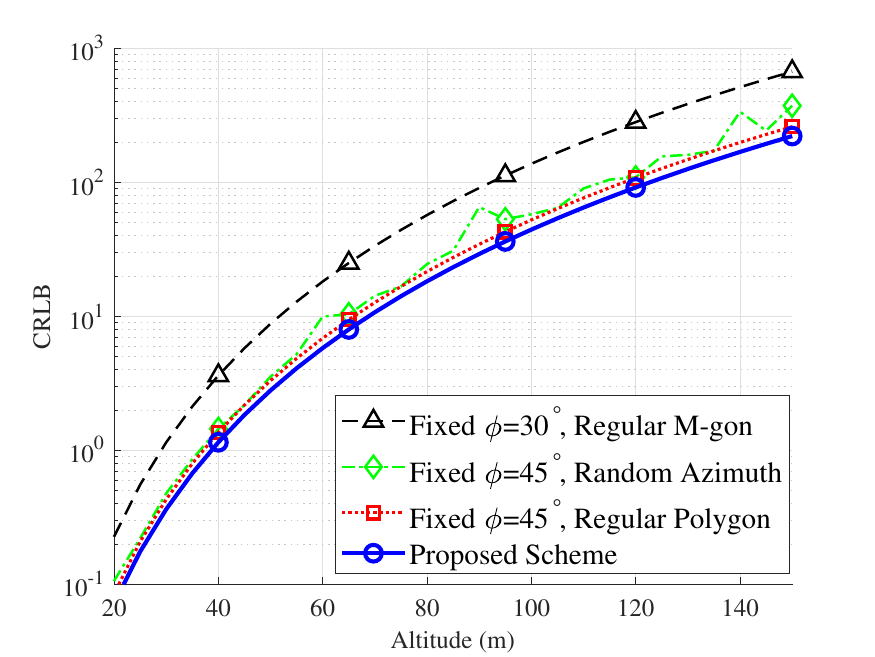}
    \caption{The CRLB vs. flight altitudes under various formation configurations. }
    \label{fig:crlb_altitude}
    \vspace{-0.5cm}
\end{figure}

\begin{figure}
    \centering
    \includegraphics[width=0.9\linewidth]{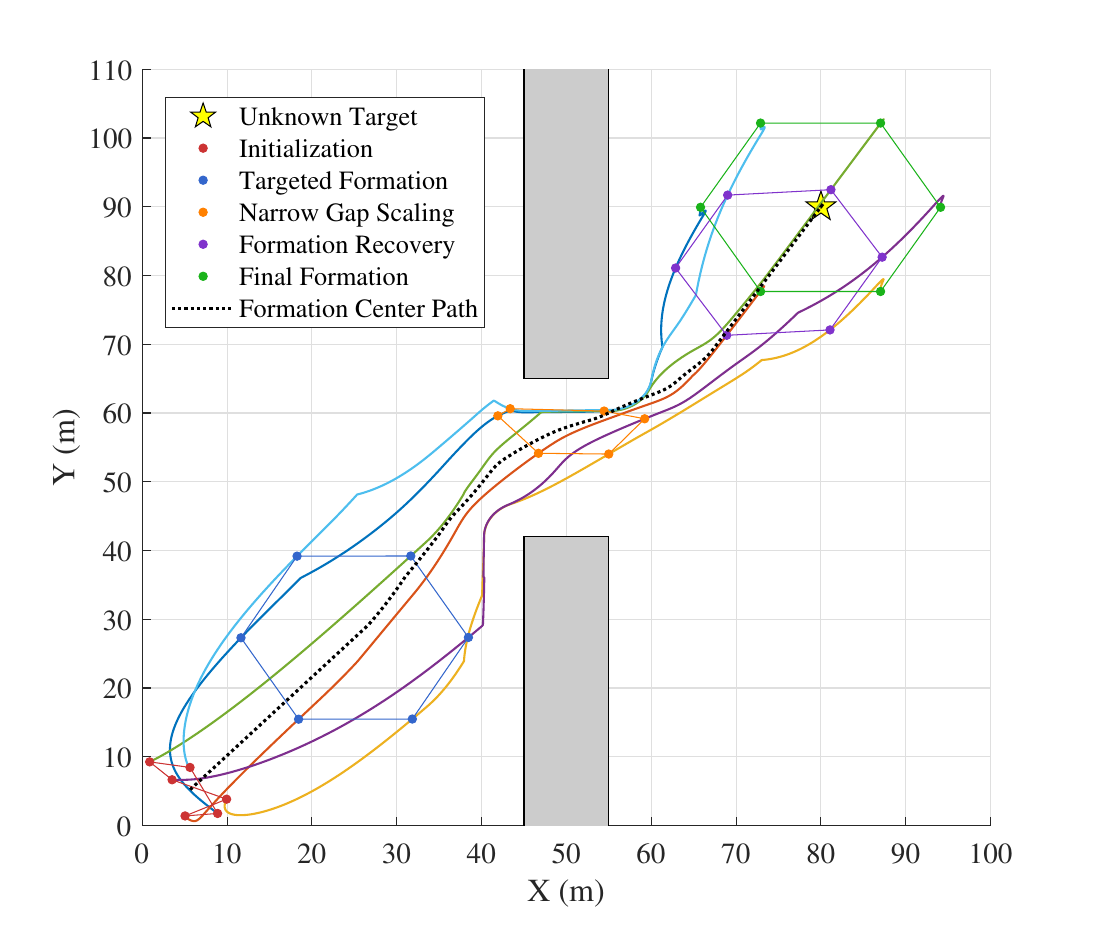}
    \caption{The formation evolution behavior and UAV trajectory in the presence of obstacles. }
    \label{fig:formation_evolution}
    \vspace{-0.6cm}
\end{figure}

In Fig.~\ref{fig:formation_evolution}, we illustrate the evolution of the UAV formation in the presence of obstacles. Beginning from an arbitrary initial deployment, the UAVs first coordinate their motions to establish the desired formation structure. As the formation encounters narrow passages, the UAVs adaptively adjust their relative separations, leading to a temporary deformation and contraction of the formation to satisfy collision-avoidance constraints. Once the obstacles are safely bypassed, the formation progressively expands and reconfigures to recover the prescribed sensing-optimal geometry $\mathcal{F^*}$ and eventually hovers over the target. This evolution process demonstrates that the proposed distributed control strategy effectively couples formation reshaping with cooperative motion.

\begin{figure}[t]
    \centering
    \includegraphics[width=\linewidth]{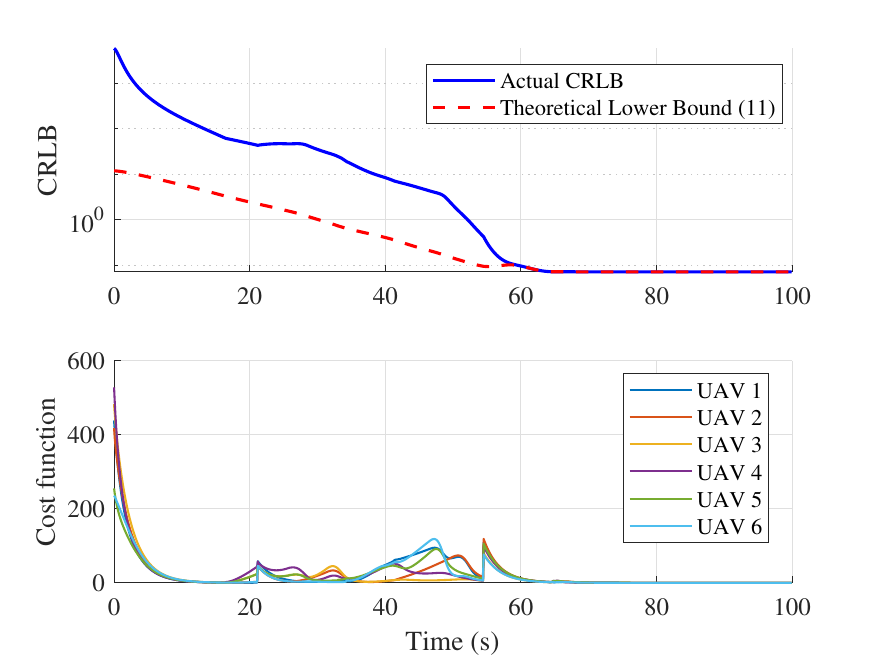}
    \caption{The evaluation of CRLB and cost function \eqref{costfun}. }
    \label{fig:time_varying}
    \vspace{-0.5cm}
\end{figure}

Fig.~\ref{fig:time_varying} depicts the time evolution of the CRLB and the control cost \eqref{costfun}. We observe that the CRLB decreases rapidly as the UAVs adjust their relative positions and the formation approaches the sensing-optimal geometry. As the formation converges, the CRLB stabilizes near the theoretical lower bound \eqref{theolowbound}. In parallel, the control cost generally exhibits a decreasing trend as the formation stabilizes. However, a temporary increase is observed during the interval of approximately 20--60~s, corresponding to the case in which the formation encounters obstacles and undergoes deformation to satisfy collision-avoidance constraints. This deviation from the desired formation increases the control effort.  It is noted that the formation gradually recovers the target configuration, and the control cost subsequently decreases and converges to a steady value. 


\vspace{-0.1in}
\section{Conclusion}  \label{sec6}
This paper studied CRLB-optimal formation design and distributed control for cooperative sensing in LAWNs. By analyzing range-based measurements, the Fisher information matrix for the target location was derived, and the sensing-optimal formation was analytically characterized under the A-optimality criterion, yielding a regular polygon geometry with a common elevation angle. Then, a distributed formation control strategy was developed to navigate UAVs from arbitrary initial positions toward the sensing-optimal configuration while ensuring obstacle avoidance. Extensive simulation results verified the effectiveness of the proposed design.


 \vspace{-0.2in}
\bibliographystyle{IEEEtran}
\bibliography{citation}

\end{document}